\newtheorem{thm}{Theorem}
\newtheorem{lem}{Lemma}
\newenvironment{proof}[1][Proof]{\noindent\textbf{#1.} }{\ \rule{0.2em}{0.2em}}
\begin{document}

\title{Novel Schemes for Directly Measuring Entanglement of General States}
\author{Jianming Cai$^{1,2}$}
\email{Jian.Cai@uibk.ac.at}
\author{Wei Song$^{3,4}$}
\affiliation{$^{1}$Institut f{\"u}r Theoretische Physik, Universit{\"a}t Innsbruck,
Technikerstra{\ss }e 25, A-6020 Innsbruck, Austria \\
$^{2}$Institut f\"ur Quantenoptik und Quanteninformation der
\"Osterreichischen Akademie der Wissenschaften, Innsbruck, Austria \\
 $^{3}$Institute for Condensed Matter Physics, School of Physics and
Telecommunication Engineering, South China Normal University,
Guangzhou 510006, China\\
 $^{4}$Department of Modern Physics,
University of Science and Technology of China, Hefei 230026, China}
\date{\today}

\begin{abstract}
An intrinsic relation between maximally entangled states and
entanglement measures is revealed, which plays a role in
establishing connections for different entanglement quantifiers. We
exploit the basic idea and propose a framework to construct schemes
for directly measuring entanglement of general states. In
particular, we demonstrate that rank-$1$ local factorizable
projective measurements, which are achievable with only one copy of
entangled state involved at a time in a sequential way, are
sufficient to directly determine the concurrence of arbitrary
two-qubit entangled state.
\end{abstract}

\pacs{03.67.Lx, 03.67.Mn, 03.65.Ud, 03.65.Wj}
\maketitle

\textit{Introduction.} Quantum entanglement is one of the most
significant feature of quantum mechanics \cite{EPR35}, which has
attracted a lot of interest within the burgeoning field of quantum
information science and its intersection with many-body physics
\cite{Horodecki07,Amico07}. Entanglement measures play a central
role in the theory of entanglement. It is well known that
antilinearity from symmetries with time reversal operations is
intrinsically nonlocal, which leads to a natural routine to describe
and estimate entanglement \cite{Peres96,Horodecki96,Wootters98,Vidal02,Rungta01,Uhlmann00,Chen05}.
These entanglement measures based on nonphysical allowed
transformations are usually nonlinear functions of the density
matrix elements, and thus are difficult to determine directly in
experiments. It is worth to point out that there exists an
alternative experimental favorable class of entanglement
quantifiers, which are directly defined through the measurable
observables \cite{Mintert05,Meyer02}. The interesting problems are:
{\it How and why can these quantities from (anti)symmetric projections
serve as entanglement quantifiers? Is there any connection
between the above two different classes of entanglement measures?}

As far as determining entanglement is concerned, there are two
desirable features. The first is about the parametric efficiency
issue. It is inefficient and not necessary to obtain all the state
parameters as quantum state tomography \cite{James01}, in particular
when one considers high dimensional and multipartite quantum
systems. This concerns not only experimental determining
entanglement itself, but is related to the general theoretical
problem about extracting information efficiently from an unknown
quantum state with the least measurement cost \cite{Horodecki03}.
Second, in many realistic scenarios, entangled particles are shared
by two distant parties Alice and Bob, e.g. long distance quantum
communication. It will be valuable that Alice and Bob can measure
entanglement with only local operations on individual subsystems and
classical communications (LOCC).

The basic ideas of measuring these entanglement measures based on
nonphysical allowed transformations directly without state
reconstruction \cite{Horodecki03,Horodecki0206,Carteret03,Alves03,Bai06,Brun2004,
Yu08}, mainly rely on multiple copies of entangled state, i.e., a
number of entangled state need to be present at the same time. This could be difficult for
certain physical systems. The requisite experimental components
include structure physical approximation (SPA) and interferometer
circuit, the implementation of which with only LOCC is a great
challenge. One may wonder {\it Whether projective observables can  also help to
determine nonphysical allowed transformation based entanglement
measures?}

In this paper, we address the above problems by revealing an
intrinsic connection between maximally entangled states (MES) and
the definitions of nonphysical allowed transformation based
entanglement measures. The connection enables us to find that (anti)
symmetric projections can indeed extract the properties of density
matrices with nonphysical allowed transformations. This result opens
the possibility to establish relations between various kinds of
entanglement quantifiers
\cite{Wootters98,Vidal02,Rungta01,Uhlmann00,Chen05,Mintert05,Meyer02}.
The connection also allows us to propose a framework based on local
projections to design schemes for directly measuring the
entanglement quantifiers from nonphysical allowed transformations.
We explicitly demonstrate the benefit in determining entanglement of
general two-qubit states. The most remarkable feature is that only
one copy of entangled state need to be present at a time, which is
distinct from other schemes using multiple copies of entangled
state. As applications of our idea, we elucidate the physics
underlying the first noiseless quantum circuit for the
Peres-Horodecki separability criterion
\cite{Peres96,Horodecki96,Carteret05}, which was obtained in
\cite{Carteret05} through the mathematical analysis of polynomial
invariants \cite{Brun2004}. Moreover, one can easily construct a
circuit to directly measure the realignment properties of quantum
states \cite{Chen05}.

\textit{Connections between MES and entanglement measures.} One
useful tool in the entanglement theory is positive but not
completely positive map, with antilinear conjugation as the most
representative operation. Following the Peres-Horodecki separability
criterion \cite{Peres96,Horodecki96}, lots of entanglement detection
methods and measures based on the conjugation of density operator
have been established \cite{Wootters98,Vidal02,Rungta01,Uhlmann00}.
We propose to mathematically implement nonphysical allowed
transformations, in particular antilinear conjugation, with the
notation of MES. From an operational viewpoint, MES with appropriate
local unitary operations can be associated to (anti)symmetric
projective measurements. Thus, our result makes a connection between
antilinear conjugation \cite{Wootters98,Vidal02,Rungta01,Uhlmann00}
and (anti)symmetric projection based entanglement quantifiers
\cite{Mintert05,Meyer02}.

\begin{lem}
Given an $n$-partite operator $A$ on the Hilbert space $\mathcal{H}=\mathcal{%
H}_{1}\bigotimes \cdots \bigotimes \mathcal{H}_{n}$, with the dimension $dim(%
\mathcal{H}_{i})=d_{i}$, the maximally entangled state of $d_{i}\otimes
d_{i} $ bipartite system is denoted as $|S_{i}\rangle
=\sum_{s=0}^{d_{i}-1}|ss\rangle /\sqrt{d_{i}}$, then
\begin{equation}
(A\otimes \mathbf{I}_{\bar{1}\cdots \bar{n}})|\mathcal{S}\rangle=
(\mathbf{I}_{1\cdots n}\otimes A^{T})|\mathcal{S}\rangle \quad
\text{with} \quad
|\mathcal{S}\rangle=\bigotimes_{i=1}^{n}|S_{i}\rangle _{i\bar{i}}
\end{equation}
Moreover, we have
\begin{equation}
trA=(\prod_{i=1}^{n}d_{i})\langle \mathcal{S}|(\mathbf{I}_{1\cdots n}\otimes A)|\mathcal{S}\rangle
\end{equation}
\end{lem}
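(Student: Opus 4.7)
The plan is to reduce the multipartite statement to the standard one-pair ``ricochet'' identity by recognizing $|\mathcal{S}\rangle$ as a maximally entangled state of Schmidt rank $D=\prod_{i=1}^{n}d_{i}$ between the collective system $1\cdots n$ and its copy $\bar 1\cdots\bar n$. First I would regroup the tensor factors in $\bigotimes_{i}|S_{i}\rangle_{i\bar i}$ to obtain
\begin{equation}
|\mathcal{S}\rangle=\frac{1}{\sqrt{D}}\sum_{\vec{s}}|\vec{s}\rangle_{1\cdots n}\otimes|\vec{s}\rangle_{\bar 1\cdots\bar n},
\end{equation}
where $\vec{s}=(s_{1},\ldots,s_{n})$ ranges over the product computational basis of $\mathcal{H}$. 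Because the tensor factors are labeled by distinct Hilbert spaces, this reordering introduces no subtleties. With this representation in hand both claims reduce to short index computations.

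For the first identity I would evaluate both sides in this basis. Writing $A_{\vec{t}\vec{s}}=\langle\vec{t}|A|\vec{s}\rangle$, the left-hand side becomes $\frac{1}{\sqrt{D}}\sum_{\vec{s},\vec{t}}A_{\vec{t}\vec{s}}|\vec{t}\rangle|\vec{s}\rangle$, while the right-hand side yields $\frac{1}{\sqrt{D}}\sum_{\vec{s},\vec{t}}(A^{T})_{\vec{t}\vec{s}}|\vec{s}\rangle|\vec{t}\rangle=\frac{1}{\sqrt{D}}\sum_{\vec{s},\vec{t}}A_{\vec{s}\vec{t}}|\vec{s}\rangle|\vec{t}\rangle$. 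A relabeling $\vec{s}\leftrightarrow\vec{t}$ in the last sum matches the two expressions term by term, establishing the ricochet identity.

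For the trace formula I would simply insert the expansion of $|\mathcal{S}\rangle$ and use orthonormality of the computational basis:
\begin{equation}
\langle\mathcal{S}|(\mathbf{I}_{1\cdots n}\otimes A)|\mathcal{S}\rangle=\frac{1}{D}\sum_{\vec{s},\vec{t}}\langle\vec{t}|\vec{s}\rangle\,\langle\vec{t}|A|\vec{s}\rangle=\frac{1}{D}\sum_{\vec{s}}A_{\vec{s}\vec{s}}=\frac{1}{D}\mathrm{tr}(A).
\end{equation}
Multiplying through by $D=\prod_{i}d_{i}$ gives the stated equality.

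There is no deep obstacle; the whole argument is elementary bookkeeping once the canonical bipartite form of $|\mathcal{S}\rangle$ is identified. The only mildly delicate step is the initial regrouping of factors, where one must track carefully which subsystem sits in which Hilbert space so that the two halves $1\cdots n$ and $\bar 1\cdots\bar n$ are cleanly separated. After that, the first identity is just the familiar transpose trick applied to the composite $D\otimes D$ maximally entangled state, and the trace formula is an immediate consequence of basis orthonormality.
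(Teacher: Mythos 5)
Your proposal is correct; the paper states this lemma without proof, noting only that Eq.~(1) generalizes the standard ``operator travels through the singlet'' (ricochet) identity, which is exactly the argument you spell out by regrouping $\bigotimes_i|S_i\rangle_{i\bar i}$ into a single $D\otimes D$ maximally entangled state with $D=\prod_i d_i$. Both the transpose identity and the trace formula then follow from the index bookkeeping you give, so your write-up supplies precisely the elementary verification the paper omits.
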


Eq.(1) is the generalization of the fact that qubit operator can travel through singlets, which
has been used to investigate the localizable entanglement properties of valence bond states \cite{Frank04}.
Here, from a different perspective, we view $A$ itself as a density matrix $\rho$ rather than an operator on
quantum states, lemma 1 thus indicates that with the notation of MES, we can mathematically implement the
(partial) transpose (conjugation) of arbitrary quantum states. Eq.(2) is another key point, which enables
us to extract the properties of transformed density operators through the projective measurements associated to
MES.

\textit{Remark 1:} Our idea is quite different from the SPA, in
which the transpose of quantum state is approximated by a completely
positive map \cite{Horodecki0206,Horodecki03}. It is worth to point
out that, in lemma 1, $A$ can be a density operator of arbitrary
dimensional multipartite quantum systems.

\begin{thm}
For a general quantum state $\rho $ on the Hilbert space $\mathcal{H}=%
\mathcal{H}_{1}\bigotimes \cdots \bigotimes \mathcal{H}_{n}$, we denote the antilinear transformation of $\rho $ as $ \tilde{\rho}%
_{u}=(U_{1}\otimes \cdots \otimes U_{n})\rho ^{\ast }(U_{1}^{\dagger
}\otimes \cdots \otimes U_{n}^{\dagger })$ and $|S_{u_{i}}\rangle
=(\mathbf{I}\otimes U_{i})|S_{i}\rangle $, it can be seen that
\begin{equation}
(\rho \otimes \rho )|\mathcal{S}_{u}\rangle=(\mathbf{I}_{1\cdots n}
\otimes \rho \tilde{\rho}_{u})|\mathcal{S}_{u}\rangle \quad
\text{with} \quad
|\mathcal{S}_{u}\rangle=\bigotimes_{i=1}^{n}|S_{u_{i}}\rangle
_{i\bar{i}}
\end{equation}%
This will lead to
\begin{equation}
tr(\rho \tilde{\rho}_{u})=(\prod_{i=1}^{n}d_{i})\cdot
tr[\bigotimes\limits_{i=1}^{n}\mathcal{P}_{u}^{(i\bar{i})}(\rho \otimes \rho )]
\end{equation}
where $U_{i}$ are local unitary operations, and
$\mathcal{P}_{u}^{(i\bar{i})} =|S_{u_{i}}\rangle _{i\bar{i}}\langle
S_{u_{i}}|$ are projections on two copies of the $i$-th subsystem.
\end{thm}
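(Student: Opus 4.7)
The strategy is to reduce Theorem~1 to Lemma~1 by absorbing the local unitaries $U_i$ into the MES reference state. Writing $U=U_{1}\otimes\cdots\otimes U_{n}$ acting on the second copy $\bar{1}\cdots\bar{n}$, one has $|\mathcal{S}_{u}\rangle=(\mathbf{I}_{1\cdots n}\otimes U)|\mathcal{S}\rangle$, so every $U_{i}$ can be shuttled past the first factor of $\rho$ and collected next to the second. To prove Eq.~(3) I would then write
\begin{equation*}
(\rho\otimes\rho)|\mathcal{S}_{u}\rangle=(\mathbf{I}\otimes\rho U)(\rho\otimes\mathbf{I})|\mathcal{S}\rangle,
\end{equation*}
and apply Lemma~1 Eq.~(1) with $A=\rho$ to slide the first $\rho$ across the MES, producing $(\mathbf{I}\otimes\rho U\rho^{T})|\mathcal{S}\rangle$. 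Using Hermiticity of the density operator one has $\rho^{T}=\rho^{\ast}$, and the definition $\tilde{\rho}_{u}=U\rho^{\ast}U^{\dagger}$ combined with unitarity of $U$ gives $U\rho^{\ast}=\tilde{\rho}_{u}U$; substituting and re-absorbing the trailing $U$ into $|\mathcal{S}_{u}\rangle$ delivers $(\mathbf{I}\otimes\rho\tilde{\rho}_{u})|\mathcal{S}_{u}\rangle$, which is Eq.~(3).

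For Eq.~(4), I would take the inner product of Eq.~(3) with $\langle\mathcal{S}_{u}|$. The left-hand side becomes $\langle\mathcal{S}_{u}|(\rho\otimes\rho)|\mathcal{S}_{u}\rangle$, which equals $\mathrm{tr}[\bigotimes_{i}\mathcal{P}_{u}^{(i\bar{i})}(\rho\otimes\rho)]$ because $|\mathcal{S}_{u}\rangle\langle\mathcal{S}_{u}|=\bigotimes_{i}\mathcal{P}_{u}^{(i\bar{i})}$ and the trace is cyclic. The right-hand side becomes $\langle\mathcal{S}|(\mathbf{I}\otimes U^{\dagger}\rho\tilde{\rho}_{u}U)|\mathcal{S}\rangle$ after pulling the unitaries out of the inner product, and Lemma~1 Eq.~(2) applied with $A=U^{\dagger}\rho\tilde{\rho}_{u}U$ collapses this to $(\prod_{i}d_{i})^{-1}\mathrm{tr}(\rho\tilde{\rho}_{u})$, using cyclicity once more to cancel the conjugating $U$'s. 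Multiplying through by $\prod_{i}d_{i}$ yields Eq.~(4).

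The only subtlety worth flagging is the identification $\rho^{T}=\rho^{\ast}$: this is where Hermiticity of the density operator is essential and is the precise point at which the transpose that Lemma~1 naturally produces gets converted into the complex conjugation appearing in $\tilde{\rho}_{u}$. Conceptually this single Hermiticity step is what underlies the connection the authors want to highlight between antilinear-conjugation-based and (anti)symmetric-projection-based entanglement measures; once it is in place the remainder of the argument is routine manipulation of the MES identities provided by Lemma~1.
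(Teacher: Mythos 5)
Your proof is correct and follows exactly the route the paper intends: the authors state Theorem~1 without an explicit proof, presenting it as a direct consequence of Lemma~1, and your argument (absorb the $U_i$ into the reference MES, slide $\rho$ across via Lemma~1 Eq.~(1), use Hermiticity to turn $\rho^{T}$ into $\rho^{*}$, then project with $\langle\mathcal{S}_{u}|$ and invoke Lemma~1 Eq.~(2)) is precisely the omitted derivation. No gaps.
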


\textit{Remark 2:} Theorem 1 can help us to establish connections
between antilinearity and (anti)symmetric projections. The result is
quite general, e.g. $U_{i}$ can be arbitrary local unitary
operators, and it is applicable for high dimensional situations by
using appropriate $U_{i}$ or reducing the projections of high
dimensional bipartite systems into a sum of two-qubit projections
\cite{Mintert05}. It also provides an intuitive meaning of the
Wootters' concurrence, which can be linked with the success
probability of establishing MES via entanglement swapping following
the above theorem.

\textit{Novel schemes for measuring entanglement.} Besides the
theoretical interest, with the above connection we find that
projective observables can help to determine these nonphysical
transformation based entanglement quantifiers with much less
experimental efforts. We first demonstrate how to directly measure
the concurrence of general states
\cite{Wootters98,Rungta01,Uhlmann00}, and then explicitly illustrate
the physics underlying the first noiseless circuit for the
Peres-Horodecki separability criterion
\cite{Peres96,Horodecki96,Vidal02,Carteret05} following the present
idea. Finally, we construct a simple circuit for the realignment
separability criterion \cite{Chen05}.

\textit{I. Scheme for directly measuring the concurrence of general
states.} The concurrence family of entanglement measures are defined
through the eigenvalues $\lambda_{j}$ of $\rho\tilde{\rho}_{u}$ as
in theorem $1$. In order to determine these eigenvalues, quantum
state tomography need to obtain $(d_{1}\cdots d_{n})^{2}-1$
parameters, while direct strategy without state reconstruction only
need to measure the moments $m_{k}=\sum_{j}\lambda_{j}^{k}$, the
number of which is $d_{1}\cdots d_{n}$, and thus it is quadratically
efficient.

\begin{lem}
The moments of $\rho \tilde{\rho}_{u}$ can be obtained as follows
\begin{equation}
m_{k}=(d_{a}d_{b})^{k}tr[(\mathcal{P}^{(a)}\otimes \mathcal{P}^{(b)})\cdot V_{a_{2}\cdots
a_{2k}}\cdot V_{b_{2}\cdots b_{2k}}\bigotimes\limits_{i=1}^{2k}(\rho
)_{a_{i}b_{i}}]
\end{equation}%
$\mathcal{P}^{(s)}=\mathcal{P}_{u}^{(s_{1}s_{2})}\otimes \cdots \otimes
\mathcal{P}_{u}^{(s_{2k-1}s_{2k})}$, and $V_{s_{2}\cdots s_{2k}}$ are $k$-circle
permutations $(s=a,b)$.
\end{lem}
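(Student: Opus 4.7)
The plan is to expand the $k$-th moment $m_k = tr[(\rho\tilde{\rho}_{u})^k]$ into a single trace over $2k$ copies of $\rho$ by combining two ingredients: the standard cyclic trace identity and the ``doubling'' relation implicit in Theorem~1.

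First, I invoke the identity $tr[M^k] = tr[V^{(k)}\,M^{\otimes k}]$, where $V^{(k)}$ is the cyclic permutation on $k$ tensor copies of the underlying Hilbert space. Since $\rho\tilde{\rho}_{u}$ acts on $\mathcal{H}_a\otimes\mathcal{H}_b$, this $V^{(k)}$ factorizes as a cyclic permutation on the $k$ copies of $\mathcal{H}_a$ times one on the $k$ copies of $\mathcal{H}_b$, which is the origin of the two permutation factors $V_{a_{2}\cdots a_{2k}}$ and $V_{b_{2}\cdots b_{2k}}$ in Eq.~(5).

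Second, I extract from Theorem~1 the operator identity
\[
\rho\tilde{\rho}_{u} \;=\; d_{a}d_{b}\,tr_{ab}\bigl[(\mathcal{P}_{u}^{(a\bar{a})}\otimes \mathcal{P}_{u}^{(b\bar{b})})(\rho\otimes\rho)\bigr],
\]
which is obtained by multiplying $(\rho\otimes\rho)|\mathcal{S}_{u}\rangle = (\mathbf{I}_{ab}\otimes\rho\tilde{\rho}_{u})|\mathcal{S}_{u}\rangle$ from the right by $\langle\mathcal{S}_{u}|$ and then tracing out the auxiliary pair $ab$, using $tr_{a}[\mathcal{P}_{u}^{(a\bar{a})}]=\mathbf{I}_{\bar{a}}/d_{a}$. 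Applying this relation to each of the $k$ factors inside $(\rho\tilde{\rho}_{u})^{\otimes k}$ doubles the number of copies of $\rho$ from $k$ to $2k$, at the cost of one MES projection and one factor $d_{a}d_{b}$ per factor.

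Third, I merge the two ingredients with a consistent labelling. Assigning the auxiliary pair produced by the $j$-th doubling to sites $(a_{2j-1},b_{2j-1})$ and leaving the original site as $(a_{2j},b_{2j})$, the $k$ MES projectors assemble precisely into $\mathcal{P}^{(a)}\otimes\mathcal{P}^{(b)}$ with the pairing $(a_{2j-1},a_{2j})$ and $(b_{2j-1},b_{2j})$; the factorized cyclic permutation from the first step acts only on the original even-indexed sites $\{a_{2},a_{4},\dots,a_{2k}\}$ and $\{b_{2},b_{4},\dots,b_{2k}\}$, realizing $V_{a_{2}\cdots a_{2k}}\cdot V_{b_{2}\cdots b_{2k}}$; the $k$ internal partial traces over auxiliary systems merge with the outer trace into one trace over all $4k$ subsystems; and the prefactors collect into $(d_{a}d_{b})^{k}$, which is exactly Eq.~(5).

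The main obstacle is purely bookkeeping rather than conceptual: keeping straight which subsystems are paired by a projector versus permuted by a cycle, and which of the $2k$ copies of $\rho$ sits on which pair once the $k$ doublings are performed in parallel. Once the index conventions are fixed, the derivation is a mechanical composition of Theorem~1 with the cyclic trace identity, with no additional estimation or limiting argument required.
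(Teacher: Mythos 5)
Your proposal is correct and follows essentially the same route as the paper's own (much terser) proof: both combine the cyclic-permutation trace identity $tr(V^{(k)}\bigotimes_i A_i)=tr(A_k\cdots A_1)$ with the MES ``doubling'' of Theorem~1 to turn $2k$ copies of $\rho$ into $k$ effective copies of $\rho\tilde{\rho}_{u}$, and you merely make the index bookkeeping explicit. The only cosmetic differences --- your partial-trace version of the doubling yields $\tilde{\rho}_{u}\rho$ rather than $\rho\tilde{\rho}_{u}$, and places $V$ to the left of $\mathcal{P}$ under the trace --- are harmless, since both orderings give the same value $m_k$.
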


\begin{proof}
The $k$-cycle permutation
$V^{(k)}|\phi_{1}\rangle|\phi_{2}\rangle\cdots
|\phi_{k}\rangle=|\phi_{k}\rangle|\phi_{1}\rangle\cdots|\phi_{k-1}\rangle$
is the key element for spectrum measurement based on the property
$tr(V^{(k)}\bigotimes\limits_{i=1}^{k}A_{i})=tr(A_{k}\cdots A_{1})$
\cite{Ifmc}. As in lemma 1, using $2k$ copies of entangled state and
with the notation of MES, we can mathematically have $k$ copies of
$\rho\tilde{\rho}_{u}$. Thus, one can get Eq.(5) with the above two
facts.
\end{proof}

\textit{Remark 3:} For simplicity, we only give the formulations for
bipartite systems, lemma 2 however is valid for general multipartite
states. Our results provide a simple and general framework to design
schemes for directly measuring the concurrence family of entanglement
measures.

%Compared to constructing approximate completely positive
%maps, which could be complicated for high dimensional multipartite
%states,

If we use the similar interferometer circuit for spectrum measurement as
usual, $m_{k}$ can be obtained by controlled $k$-circle permutation and
experimental feasible antisymmetric projections, which means that half of
controlled-swap operations are saved compared with controlled $2k$-circle
permutation. Since $m_{k}$ are real, we do not have to measure the whole
interference pattern in order to obtain the visibility \cite{Horodecki0206,Horodecki03}.
Nevertheless, the implementation of interferometer circuit by LOCC is still complicated.
The experimental efforts can be reduced if no interferometer circuit is required.
We demonstrate the benefit of our framework in the case of general two-qubit states by
showing that only rank-$1$ local factorizable projective measurements are required, which then
leads to another interesting feature that we do not have to manipulate a number of entangled state at a time,
even the starting point of our scheme is also based on multiple copies of entangled state.

Consider a general two-qubit state $\rho$, its entanglement
can be quantified by the Wootters' concurrence as $C=\max\left\{ 0,\lambda_{1}-
\lambda_{2}-\lambda_{3}- \lambda_{4}\right\}$, where
$ \lambda_{i}$s are the square roots of the eigenvalues of $\rho\tilde{\rho}$ in the
decreasing order \cite{Wootters98}, with
$\tilde{\rho}=(\sigma_{y}\otimes\sigma_{y})\rho^{*}(
\sigma_{y}\otimes\sigma_{y})$. Before proceeding, we first introduce
some notations as $|\varphi _{0}\rangle =\otimes_{i=1}^{k}|S_{y}\rangle_{2i-1,2i}$, $%
|\varphi _{1}\rangle = V_{2,\cdots, 2k}|\varphi _{0}\rangle $ and $|\varphi
_{2}\rangle =-V_{2k-1,2k}|\varphi _{1}\rangle $, $|\varphi _{3}\rangle
=|\varphi _{1}\rangle-|\varphi _{2}\rangle $, where $V_{1, \cdots,
l}|\upsilon_{1}\rangle\cdots|\upsilon_{l}\rangle
=|\upsilon_{l}\rangle|\upsilon_{1}\rangle\cdots|\upsilon_{l-1}\rangle$.

\begin{thm}
Given $2k$ copies of general two-qubit state $\varrho_{2k}=\bigotimes%
\limits_{i=1}^{2k}(\rho)_{a_{i}b_{i}}$, the $k$-th moment $m_{k}$ can be
determined by the rank-$1$ local projective measurements as
\begin{eqnarray}
m_{1}&=&4 \langle \mathcal{P}^{(a)}_{0}\otimes \mathcal{P}^{(b)}_{0}\rangle \\ \nonumber
m_{k}&=&\frac{1}{4}m_{1}m_{k-1}+2^{2k}(\langle \mathcal{P}^{(a)}_{1}\otimes
\mathcal{P}^{(b)}_{1}\rangle-\langle \mathcal{P}^{(a)}_{2}\otimes \mathcal{P}^{(b)}_{2}\rangle)
\end{eqnarray}
where $k=2,3,4 $, and $\mathcal{P}_{0}=|\varphi _{0}\rangle \langle
\varphi_{0}|$, $\mathcal{P}_{l}=| \phi _{l}\rangle \langle
\phi_{l}|$ $(l=1,2)$ with $|\phi_{1}\rangle=(|\varphi _{0}\rangle
+|\varphi _{3}\rangle )/2$, $| \phi_{2}\rangle=(|\varphi _{0}\rangle
+i|\varphi _{3}\rangle )/2$. In particular, for $k=2$ the
expectation value $\langle \mathcal{P}^{(a)}_{1}\otimes
\mathcal{P}^{(b)}_{1}\rangle=m_{1}^{2}/16$.
\end{thm}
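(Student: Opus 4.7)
The plan is to start from the two-qubit specialisation of Lemma~2 (where $(d_{a}d_{b})^{k}=4^{k}=2^{2k}$),
$m_{k}=2^{2k}\,tr\bigl[(\mathcal{P}_{0}^{(a)}\otimes\mathcal{P}_{0}^{(b)})\,V_{a_{2}\cdots a_{2k}}V_{b_{2}\cdots b_{2k}}\,\varrho_{2k}\bigr]$,
and to rewrite the non-Hermitian operator $(\mathcal{P}_{0}^{(a)}V_{a_{2}\cdots a_{2k}})\otimes(\mathcal{P}_{0}^{(b)}V_{b_{2}\cdots b_{2k}})$ as a factorising piece plus the combination $\mathcal{P}_{1}^{(a)}\otimes\mathcal{P}_{1}^{(b)}-\mathcal{P}_{2}^{(a)}\otimes\mathcal{P}_{2}^{(b)}$. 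The $k=1$ identity is then immediate because $V_{a_{2}}$ is trivial, leaving only the $\mathcal{P}_{0}$ expectation.

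The key algebraic input is that $|S_{y}\rangle$ is the two-qubit singlet up to a phase, hence antisymmetric under swap; this gives $V_{2k-1,2k}|\varphi_{0}\rangle=-|\varphi_{0}\rangle$. Decomposing $V_{2,\ldots,2k}=\tilde V\,V_{2k-1,2k}$ with $\tilde V=V_{2,\ldots,2k-1}$ and substituting the two-qubit swap identity $V_{2k-1,2k}=\mathbf{I}-2\Pi$, with $\Pi=|S_{y}\rangle_{2k-1,2k}\langle S_{y}|$, gives $V_{a}V_{b}=\tilde V_{a}\tilde V_{b}(\mathbf{I}-2\Pi_{a}-2\Pi_{b}+4\Pi_{a}\Pi_{b})$. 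In the $4\Pi_{a}\Pi_{b}$ piece the projectors decouple the last two copies $(a_{2k-1}b_{2k-1})$ and $(a_{2k}b_{2k})$ from the rest: Theorem~1 gives $\langle S_{y}|_{a_{2k-1}a_{2k}}\langle S_{y}|_{b_{2k-1}b_{2k}}\rho\otimes\rho|S_{y}\rangle|S_{y}\rangle=m_{1}/4$, while on the remaining $2(k-1)$ copies $\tilde V_{a}\tilde V_{b}$ reduces (once the $|S_{y}\rangle_{2k-1,2k}$ projection is in place and fixes site $2k-1$ to its partner $2k$) to an effective cyclic permutation of the form appearing in Lemma~2 at order $k-1$, contributing $m_{k-1}/4^{k-1}$. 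Collecting the overall $4^{k}\cdot 4$ against the two $1/4$ denominators produces exactly the $\tfrac14 m_{1}m_{k-1}$ term.

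The remaining three pieces $\mathcal{P}_{0}^{(a,b)}\tilde V_{a}\tilde V_{b}(\mathbf{I}-2\Pi_{a}-2\Pi_{b})$ have to reproduce $2^{2k}(\langle\mathcal{P}_{1}^{(a)}\otimes\mathcal{P}_{1}^{(b)}\rangle-\langle\mathcal{P}_{2}^{(a)}\otimes\mathcal{P}_{2}^{(b)}\rangle)$. First I would verify $\tilde V|\varphi_{0}\rangle=-|\varphi_{1}\rangle$ (which follows from $V|\varphi_{0}\rangle=|\varphi_{1}\rangle$ together with $V_{2k-1,2k}|\varphi_{0}\rangle=-|\varphi_{0}\rangle$), whence $|\varphi_{3}\rangle=|\varphi_{1}\rangle-|\varphi_{2}\rangle=(\mathbf{I}+V_{2k-1,2k})V|\varphi_{0}\rangle$ lies in the symmetric subspace of sites $(2k-1,2k)$ and is therefore orthogonal to $|\varphi_{0}\rangle$; a short calculation using $V^{-1}V_{2k-1,2k}V=V_{2k-2,2k-1}$ together with the Bell-pair identity $V_{2,3}(|S_{y}\rangle_{12}|S_{y}\rangle_{34})=|S_{y}\rangle_{13}|S_{y}\rangle_{24}$ gives $\langle\varphi_{1}|\varphi_{2}\rangle=-1/2$, hence $\|\varphi_{3}\|^{2}=3$, so $|\phi_{1}\rangle,|\phi_{2}\rangle$ are unit vectors. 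A direct expansion yields $\mathcal{P}_{1}-\mathcal{P}_{2}=\tfrac14[(1+i)|\varphi_{0}\rangle\langle\varphi_{3}|+(1-i)|\varphi_{3}\rangle\langle\varphi_{0}|]$; combined with the operator identity $A\otimes A-B\otimes B=\tfrac12[(A+B)\otimes(A-B)+(A-B)\otimes(A+B)]$, the tensor-product combination $\mathcal{P}_{1}^{(a)}\otimes\mathcal{P}_{1}^{(b)}-\mathcal{P}_{2}^{(a)}\otimes\mathcal{P}_{2}^{(b)}$ matches, in expectation on the permutation-invariant state $\varrho_{2k}=\rho^{\otimes 2k}$, exactly the three leftover pieces — the permutation symmetry under copy swaps that exchange $|\varphi_{0}\rangle\langle\varphi_{3}|$ with $|\varphi_{3}\rangle\langle\varphi_{0}|$ being what guarantees reality of the answer.

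Finally, the $k=2$ claim $\langle\mathcal{P}_{1}^{(a)}\otimes\mathcal{P}_{1}^{(b)}\rangle=m_{1}^{2}/16$ comes from the four-site Fierz identity $|S_{y}\rangle_{12}|S_{y}\rangle_{34}=|S_{y}\rangle_{13}|S_{y}\rangle_{24}-|S_{y}\rangle_{14}|S_{y}\rangle_{23}$, equivalently $|\varphi_{0}\rangle+|\varphi_{1}\rangle+|\varphi_{2}\rangle=0$; substituting this into the definition collapses $|\phi_{1}\rangle=(|\varphi_{0}\rangle+|\varphi_{3}\rangle)/2$ to $-|\varphi_{2}\rangle=-|S_{y}\rangle_{14}|S_{y}\rangle_{23}$, a product of MES on disjoint copy pairs, so two independent applications of Theorem~1 give $(m_{1}/4)^{2}$. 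The main obstacle will be the bookkeeping in the third paragraph: because $\tilde V_{a}$ and $\Pi_{a}$ share the site $a_{2k-1}$ and thus fail to commute, matching the three leftover pieces to $\mathcal{P}_{1}\otimes\mathcal{P}_{1}-\mathcal{P}_{2}\otimes\mathcal{P}_{2}$ requires closing the algebra onto the two-dimensional span $\{|\varphi_{0}\rangle,|\varphi_{3}\rangle\}$ by exploiting the symmetries of $\rho^{\otimes 2k}$ rather than by naive operator identities, and this is where a case-by-case check for $k=2,3,4$ may become unavoidable.
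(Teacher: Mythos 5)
Your overall architecture matches the paper's: start from Lemma~2, exploit the antisymmetry of $|S_{y}\rangle$ under the last transposition, identify the doubly--singlet-projected piece with $m_{1}m_{k-1}$ by entanglement swapping (Theorem~1), and convert the symmetric remainder into $\mathcal{P}_{1}\otimes\mathcal{P}_{1}-\mathcal{P}_{2}\otimes\mathcal{P}_{2}$ by polarization. Your treatment of the $k=2$ case via the singlet Jacobi identity $|\varphi_{0}\rangle+|\varphi_{1}\rangle+|\varphi_{2}\rangle=0$ is in fact more explicit than the paper's bare assertion that $|\phi_{1}\rangle$ collapses to a product of MES on crossed copy pairs, and it reaches the same conclusion $\langle\mathcal{P}_{1}^{(a)}\otimes\mathcal{P}_{1}^{(b)}\rangle=m_{1}^{2}/16$.

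However, the middle step as you have grouped it does not close. Assigning the $4\Pi_{a}\Pi_{b}$ piece of $(\mathbf{I}-2\Pi_{a})(\mathbf{I}-2\Pi_{b})$ to $\tfrac14 m_{1}m_{k-1}$ and the other three pieces to the projector difference is wrong term by term: the coefficient that should multiply the decoupled $\Pi_{a}\Pi_{b}$ expectation is the $1$ appearing in the orthogonal expansion $(\mathbf{I}-\Pi)_{a}(\mathbf{I}-\Pi)_{b}-(\mathbf{I}-\Pi)_{a}\Pi_{b}-\Pi_{a}(\mathbf{I}-\Pi)_{b}+\Pi_{a}\Pi_{b}$, not $4$; indeed your own bookkeeping $4^{k}\cdot 4\cdot(m_{1}/4)\cdot(m_{k-1}/4^{k-1})$ evaluates to $4\,m_{1}m_{k-1}$ rather than $\tfrac14 m_{1}m_{k-1}$, and the ``three leftover pieces'' would then have to carry a compensating $-\tfrac34 m_{1}m_{k-1}$ on top of the projector-difference term, so they cannot ``match exactly.'' The missing ingredient is the paper's parity argument: write $\langle\varphi_{1}|=\tfrac12(\langle\psi_{0}|+\langle\varphi_{3}|)$ with $|\psi_{0}\rangle=|\varphi_{1}\rangle+|\varphi_{2}\rangle$ antisymmetric and $|\varphi_{3}\rangle$ symmetric under $V_{2k-1,2k}$, note that $|\varphi_{0}\rangle$ is antisymmetric and that $\varrho_{2k}$ commutes with $V_{2k-1,2k}^{(a)}\otimes V_{2k-1,2k}^{(b)}$, so every mixed matrix element with an odd number of antisymmetric indices vanishes identically. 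Only then do the two surviving orthogonal pieces, $\tfrac14\langle\psi_{0}|\langle\psi_{0}|\varrho_{2k}|\varphi_{0}\rangle|\varphi_{0}\rangle=\tfrac14 m_{1}m_{k-1}/2^{2k}$ and $\tfrac14\langle\varphi_{3}|\langle\varphi_{3}|\varrho_{2k}|\varphi_{0}\rangle|\varphi_{0}\rangle$, reproduce Eq.~(6) with the correct coefficients, and no case-by-case verification for $k=2,3,4$ is needed for this part. Be aware also that $\Pi_{2k-1,2k}$ acts as the identity directly on $|\varphi_{0}\rangle$ (that pair is already an $|S_{y}\rangle$), so the decoupling you invoke occurs only after the cyclic permutation has acted; the operator ordering in your second paragraph leaves exactly this point ambiguous.
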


\begin{proof}
We denote $\mathcal{P}_{i,j,u,v}=\langle \varphi _{j}|\langle
\varphi _{i}|\varrho_{2k}|\varphi _{u}\rangle |\varphi _{v}\rangle
$. It can be seen that $V_{2k-1,2k}|\varphi _{0}\rangle =-|\varphi
_{0}\rangle $ and $V_{2k-1,2k}|\varphi _{3}\rangle =|\varphi
_{3}\rangle $, which leads to
$\mathcal{P}_{3,0,3,3}=-\mathcal{P}_{3,0,3,3}=0$. In the similar
way, one get $\mathcal{P}_{i,j,u,v}=0$ if the four indices are
either $0$ or $3$ and the number of $0$ is odd. After simple
calculations, we have $\mathcal{P}_{3,3,0,0}=4(\langle
\mathcal{P}^{(a)}_{1}\otimes \mathcal{P}^{(b)}_{1}\rangle-\langle
\mathcal{P}_{2}^{(a)}\otimes \mathcal{P}_{2}^{(b)}\rangle)$. Furthermore, we denote
$|\psi_{0}\rangle\equiv|\varphi_{1}\rangle+|\varphi_{2}\rangle$,
thus $\mathcal{P }=\langle \psi _{0}|\langle\psi
_{0}|\varrho_{2k}|\varphi_{0}\rangle
|\varphi_{0}\rangle=m_{1}m_{k-1}/2^{2k}$. Therefore, the $k$-th
moment is
\begin{equation}
m_{k}=2^{2k}\mathcal{P}_{1,1,0,0}=2^{2k}\cdot \frac{1}{4}(\mathcal{P}+\mathcal{P}_{3,3,0,0})
\end{equation}
We conclude that $m_{k}$ are measurable by only rank-$1$ local
projective observables as Eq.(6). For the case of $k=2$,
$|\phi_{1}\rangle=|\varphi _{1}\rangle$ which means that $\langle
\mathcal{P}^{(a)}_{1}\otimes
\mathcal{P}^{(b)}_{1}\rangle=m_{1}^{2}/16$.
\end{proof}

\textit{Remark 4:} Our scheme inherits the quadratic efficiency by
directly measuring four moments to determine the concurrence of
general two-qubit states. Only rank-$1$ local projective
measurements are required, which is expected to provide more flexibility in the
experiments.

\textit{II. Noiseless quantum circuit for the Peres separability
criteria.} The connection between MES and entanglement measures
plays its role not only in the concurrence family, but also in the
other scenarios. As an example, the first noiseless network to
measure the spectrum of a partial transposed density operator
\cite{Carteret05} from the structure of polynomial invariants, can
be recovered from a different perspective. Based on lemma
$1$, we can mathematically implement the partial transpose as
\begin{equation*}
(\mathbf{I}_{1}\otimes \rho _{\bar{1}2}\otimes
\mathbf{I}_{\bar{2}})|S\rangle _{1\bar{1}} |S\rangle _{2\bar{2}}
=\left[\mathbf{I}_{1}\otimes \mathbf{I}_{2}\otimes (\rho
^{T_{2}})_{\bar{1}\bar{2}}\right]|S\rangle _{1\bar{1}}|S\rangle
_{2\bar{2}}
\end{equation*}
We note that $V_{\bar{1}\bar{3}}|S\rangle _{1\bar{1}}|S\rangle _{3\bar{3}
}=V_{13}|S\rangle _{1\bar{1}}|S\rangle _{3\bar{3}}$, thus with $k$ copies
of entangled states we obtain the $k$-th moment of $\rho^{T_{2}}$ as
\begin{equation}
tr(\rho ^{T_{2}})^{k}= tr[V_{\bar{1}\cdots
\overline{2k-1}}V_{2\cdots 2k}^{\dagger }(\bigotimes_{l=1}^{k}\rho
_{\overline{2l-1}2l})]
\end{equation}
The righthand of Eq.(8) are exactly the circuits in
\cite{Carteret05}.

\textit{III. Realignment criterion for entanglement detection.}
Realignment of density operators, defined as $\mathcal{R}(\rho
)_{ij,kl}=\rho _{ik,jl}$, is another important operation to
establish separability criterions. Based on the trace norm of
$\mathcal{R}(\rho)$, Chen {\it et. al.} derived a low bound for the
concurrence of arbitrary dimensional $ d_{a}\otimes d_{b}$ bipartite
systems \cite{Chen05}. One can mathematically implement the
realignment as
\begin{eqnarray*}
V_{1} \cdot \rho _{12}\otimes \mathbf{I}_{\bar{1}\bar{2}}
|S\rangle _{1\bar{1}}|S\rangle_{2\bar{2}} &=&
 \mathbf{I}_{12}\otimes \left[\mathcal{R}(\rho)\right]_{\bar{1}\bar{2}}
|S\rangle _{1\bar{1}}|S\rangle _{2\bar{2}}  \\
V_{2}\cdot \rho _{12}\otimes \mathbf{I}_{\bar{1}\bar{2}}|S\rangle
_{1\bar{1}} |S\rangle _{2\bar{2}}&=&\mathbf{I}_{12}\otimes \left[
\mathcal{R}^{\dagger }(\rho )\right]_{\bar{1}\bar{2}}|S\rangle
_{1\bar{1}}|S\rangle _{2\bar{2}}
\end{eqnarray*}
where $V_{1}=V_{\bar{1}\bar{2}}V_{2\bar{2}}V_{12}$,
$V_{2}=V_{\bar{1}\bar{2} }V_{1\bar{1}}V_{12}$ and $|S\rangle
=\sum_{s=0}^{d-1}|ss\rangle $ with $d=\max \{d_{a},d_{b}\}$. Thus,
one can easily write the $k$-th moment of $\mathcal{R}(\rho
)\mathcal{R}^{\dagger }(\rho )$ as
\begin{equation*}
tr[\mathcal{R}(\rho )\mathcal{R}^{\dagger }(\rho )]^{k}=
tr[\bigotimes_{i=1}^{k}(V_{a_{2i-1}a_{2i}}V_{b_{2i-1}b_{2i-2}})\cdot
(\bigotimes_{i=1}^{2k}\rho _{a_{i}b_{i}})]
\end{equation*}
with $b_{0}=b_{2k}$, which enables us to construct a simple noiseless circuit.

\textit{Experimental implementation.} The main feature of our scheme
is taking advantage of the feasible (anti)symmetric projections to
access the properties of these nonphysical allowed transform based
entanglement quantifiers, which offers more flexibility in various
kinds of physical systems. We demonstrate in the following how
experimental efforts can be reduced in directly measuring the
concurrence of general two-qubit entangled states.

One can obtain $m_{1}$ through antisymmetric projective measurements
on two copies as Eq.(6). By noting that $|\phi _{2}\rangle
=U_{34}|\varphi _{1}\rangle $, where $U_{34}=i+(1-i)|S_{y}\rangle
\langle S_{y}|$, only one extra two-qubit gate for each party is
required to determine $m_{2}$. This can be achieved in certain
physical systems, e.g. optical lattice with engineered nearest
neighbor interactions \cite{Alves04}. To determine the higher
moments $m_{3}$ and $m_{4}$, more copies of entangled states are
required. One potential physical system is the ensembles of
multilevel quantum systems, in which $10$-$20$ qubits can be built
in single trapped cloud of ground state atoms
\cite{Brion07,Tordrup08}. The single element in our scheme, i.e.
rank-$1$ local projective measurement, is more favorable in such
physical systems than the conventional quantum circuits.

\begin{figure}[tbh]
\epsfig{file=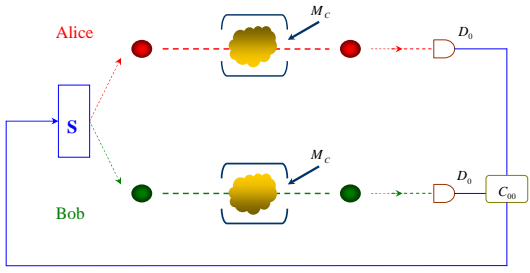,width=7cm} \caption{(Color online)
Implementation of rank-$1$ local projective measurements with
entangled pairs generated in a sequential way. The preceding
measurement results determine whether to generate the remain copies
or to restart the iteration. }
\end{figure}

Furthermore, the requirement for the schemes on multiple copies of
entangled state
\cite{Horodecki03,Horodecki0206,Carteret03,Alves03,Bai06,Brun2004,
Yu08} that a number of entangled state (up to $8$ copies ) have to
be present at the same time, becomes unnecessary in our scheme by
utilizing the intriguing matrix product state (MPS) formalism
\cite{AKLT,Frank06}. Every pure state $| \psi \rangle $ has an MPS
representation, and thus can be generated in a sequential way, i.e.
$V_{[2k]}\cdots V_{[1]}| \varphi _{L}\rangle _{\mathcal{C}}|0\cdots
0\rangle _{1\cdots 2k}=| \varphi_{R}\rangle _{\mathcal{C}}| \psi
\rangle_{1\cdots 2k}$, where $| \varphi_{L}\rangle_{\mathcal{C}}$
and $| \varphi _{R}\rangle_{\mathcal{C}}$ are the initial and final
state of an auxiliary system, e.g. cavity mode or atoms
\cite{Schon0507}. $V_{[i]}$ is a unitary interaction between qubit
$i$ and $\mathcal{C}$. By reversing the above procedure, we can
obtain the rank-$1$ local projective observable $ \langle \psi |
\langle \psi |\varrho_{2k}| \psi \rangle | \psi\rangle $ of Eq.(6)
in a similar sequential way as in Fig.1. First, the auxiliary system
is prepared in $| \varphi_{R}\rangle $, entangled pairs are
generated one by one, pass through and interact with $\mathcal{C}$,
then measured along the $\hat{z}$ basis. Only if the results of all
steps are $00$, we need to measure the auxiliary system with
$M_{\mathcal{C}}=|\varphi _{L}\rangle _{\mathcal{C}}\langle
\varphi_{L}\vert $. Otherwise, if the result of any step is not
$00$, we restart the iteration and do not need to generate all the
$2k$ copies. Our rough estimation shows that in comparison with
quantum state tomography, for each observable, the involved qubits
are a little more ($5/4$ and $4/3$ {\it vs.} $1$); however, the
total number of entangled states need to be generated is even less
($95/12$ {\it vs.} $9$).

\textit{Remark 5:} All current schemes for directly measuring
entanglement also raise an interesting problem: \textit{How does
entanglement play its role in reducing the measurement cost in
extracting information from an unknown quantum state}?

%This question is not only important in the entanglement theory, but
%also provides another perspective to understand the fundamental
%problem: \textit{what kind of entanglement features are required to
%obtain the speed-up of quantum information processing?}

\textit{Conclusions.} Maximally entangled state retains its
fundamental role in the entanglement theory, which provides an
approach to investigate the connections between different
entanglement quantifiers. With the notation of maximally entangled
states, one can mathematically implemented nonphysical allowed
transformations of quantum states. This enables us to design novel
schemes for directly measuring various kinds of entanglement
quantifiers. The benefit is explicitly demonstrated for general
two-qubit states, in which only rank-$1$ local projective
measurements are required. It is parametrically efficient without
increasing the requirement for state generation over quantum state
tomography.

\textit{Acknowledgments.} J.-M. C thanks H.-J. Briegel, O. G\"{u}hne
and B. Kraus for the helpful discussions, and is grateful for
support from the FWF through the Lise Meitner Program, and QICS (EU).

\end{document}